\documentclass[11pt]{article}

\usepackage{amsfonts,amsmath,amsxtra}
\usepackage{amssymb,amscd}
\usepackage[all]{xy}
\usepackage{bm}
\catcode`\@=11
\def\marginnote#1{}
\newcount\hour
\newcount\minute
\newtoks\amorpm
\hour=\time\divide\hour by60
\minute=\time{\multiply\hour by60 \global\advance\minute by-\hour}
\edef\standardtime{{\ifnum\hour<12 \global\amorpm={am}%
     \else\global\amorpm={pm}\advance\hour by-12 \fi
     \ifnum\hour=0 \hour=12 \fi
   \number\hour:\ifnum\minute<10 0\fi\number\minute\the\amorpm}}
\edef\militarytime{\number\hour:\ifnum\minute<10 0\fi\number\minute}
\def\draftlabel#1{{\@bsphack\if@filesw {\let\thepage\relax
\xdef\@gtempa{\write\@auxout{\string
   \newlabel{#1}{{\@currentlabel}{\thepage}}}}}\@gtempa
\if@nobreak \ifvmode\nobreak\fi\fi\fi\@esphack}
     \gdef\@eqnlabel{#1}}
\def\@eqnlabel{}
\def\@vacuum{}
\def\draftmarginnote#1{\marginpar{\raggedright\scriptsize\tt#1}}
\def\draft{\oddsidemargin -0.1truein
     \def\@oddfoot{\sl preliminary draft \hfil
     \rm\thepage\hfil\sl\today\quad\militarytime}
     \let\@evenfoot\@oddfoot \overfullrule 3pt
     \let\label=\draftlabel
     \let\marginnote=\draftmarginnote
\def\@eqnnum{{\rm (\theequation)}
\rlap{\kern\marginparsep\tt\@eqnlabel}%
\global\let\@eqnlabel\@vacuum}  }
\def\numberbysection{\@addtoreset{equation}{section}
     \def\theequation{\thesection.\arabic{equation}}}
\numberbysection

\renewcommand{\theequation}{\thesection.\arabic{equation}}
\makeatletter
\newdimen\normalarrayskip            
\newdimen\minarrayskip               
\normalarrayskip\baselineskip
\minarrayskip\jot
\newif\ifold             \oldtrue            \def\new{\oldfalse}
\def\arraymode{\ifold\relax\else\displaystyle\fi}
\def\eqnumphantom{\phantom{(\theequation)}} 
\def\@arrayskip{\ifold\baselineskip\z@\lineskip\z@
  \else
  \baselineskip\minarrayskip\lineskip1\baselineskip\fi}
\def\@arrayclassz{\ifcase \@lastchclass \@acolampacol \or
\@ampacol \or \or \or \@addamp \or
\@acolampacol \or \@firstampfalse \@acol \fi
\edef\@preamble{\@preamble
\ifcase \@chnum
  \hfil$\relax\arraymode\@sharp$\hfil
  \or $\relax\arraymode\@sharp$\hfil
  \or \hfil$\relax\arraymode\@sharp$\fi}}
\def\@array[#1]#2{\setbox\@arstrutbox=\hbox{\vrule
  height\arraystretch \ht\strutbox
  depth\arraystretch \dp\strutbox
width\z@}\@mkpream{#2}\edef\@preamble{\halign \noexpand\@halignto
\bgroup \tabskip\z@ \@arstrut \@preamble \tabskip\z@ \cr}%
\let\@startpbox\@@startpbox \let\@endpbox\@@endpbox
\if #1t\vtop \else \if#1b\vbox \else \vcenter \fi\fi
\bgroup \let\par\relax
\let\@sharp##\let\protect\relax
\@arrayskip\@preamble}
\def\eqnarray{\stepcounter{equation}%
           \let\@currentlabel=\theequation
           \global\@eqnswtrue
           \global\@eqcnt\z@
           \tabskip\@centering              
           \let\\=\@eqncr
           $$%
         \halign to \displaywidth  \bgroup
          \eqnumphantom \@eqnsel
   \hskip\@centering                               
 $\displaystyle  \tabskip\z@ {##}$%
 &\global\@eqcnt\@ne \hskip 2\arraycolsep
      $ \displaystyle  \arraymode{##}$\hfil
 &\global\@eqcnt\tw@ \hskip 2\arraycolsep
      $\displaystyle\tabskip\z@{##}$\hfil
      \tabskip\@centering
 &{##}\tabskip\z@\cr}
\makeatother

\newcounter{mo}

\newcounter{AL}

\newcounter{ed}

\newcounter{bk}

\newcommand{\ti}[1]{\tilde{#1}}
\newcommand{\om}{\omega}

\newcommand{\de}{\delta}
\newcommand{\al}{\alpha}
\newcommand{\te}{\theta}

\newcommand{\be}{\beta}

\newcommand{\G}{\Gamma}

\newcommand{\ga}{\gamma}

\def\bea{\begin{eqnarray}\new\begin{array}{cc}}
\def\ee{\end{array}\end{eqnarray}}

\newcommand{\beq}[1]{\begin{equation}\label{#1}}
\newcommand{\eq}{\end{equation}}
\newcommand{\beqn}[1]{\begin{small} \begin{eqnarray}\label{#1}}
\newcommand{\eqn}{\end{eqnarray} \end{small}}
\newcommand{\p}{\partial}
\def\sq2{\sqrt{2}}
\newcommand{\di}{{\rm diag}}
\newcommand{\oh}{\frac{1}{2}}

\def\sl2{{\rm sl}(2, {\mathbb C})}

\newcommand{\Mp}{{\rm Mp}(2,\mZ)}

\def\f1#1{\frac{1}{#1}}

\def\mC{{\mathbb C}}
\def\mZ{{\mathbb Z}}

\def\frak{\mathfrak}

\def\gg{{\frak g}}

\def\gh{{\frak h}}

\def\bfc{{\bf c}}

\def\bfe{{\bf e}}
\def\bff{{\bf f}}

\def\bfs{{\bf s}}
\def\bft{{\bf t}}
\def\bfx{{\bf x}}

\def\clF{\mathcal{F}}
\def\clR{\mathcal{R}}

\def\clO{\mathcal{O}}

\def\clM{\mathcal{M}}

\def\clS{\mathcal{S}}

\def\bag2{{\bf g_2}}
\def\bas8{{\bf so(8)}}

\def\sr2{\sqrt{2}}
\newcommand{\ran}{\rangle}
\newcommand{\lan}{\langle}
\def\f1#1{\frac{1}{#1}}

\newtheorem{predl}{Proposition}[section]

\newtheorem{rem}{Remark}[section]

\newtheorem{lem}{Lemma}[section]

\begin{document}

 \begin{flushright}

 ITEP-TH-20/26\\
 IITP-18/26
 \end{flushright}
\vspace{3mm}

 \begin{center}
  {\Large M. Olshanetsky}\\
  \vspace{3mm}
{\bf\LARGE
 Epstein
vector zeta functions related to the ADE Lie algebras}\\
 \vspace{3mm}
\
  {\rm
National Research Centre "Kurchatov Institute",\\
  Academician Kurchatov square, 1, Moscow, 123182, Russia
}\\
   {\rm Institute for Information Transmission Problems RAS (Kharkevich Institute),
 \\  Bolshoy Karetny per. 19, Moscow, 127994,  Russia}\\
\vspace{2mm}
 {\footnotesize Email
 mikhail.olshanetsky@gmail.com
 }
\end{center} 

 \begin{abstract}
 We introduce a vector-valued generalization of the Epstein zeta functions associated with the root lattices of ADE-type Lie algebras. The quadratic forms defining these lattices correspond to the Gram matrices of the simple roots. Using the discriminant group D = P/Q, we construct vector-valued theta series that realize the Weil representation of the metaplectic group Mp(2,Z). The proposed Epstein vector zeta functions are obtained as the Mellin transform of these theta series. By exploiting the equivariance properties of the theta vectors, we derive a matrix functional equation of the Riemann type. We show that the existence of this functional equation is governed by a selection rule: it holds specifically for the subspace of C-invariant vectors, where C is the central element of Mp(2,Z). Finally, we provide a complete classification of the lattices and invariant subspaces for which this matrix functional equation is satisfied.

%

\end{abstract}

\section{Introduction.
Main results}
In his 1903 paper \cite{Ep}, Paul Epstein generalized the Riemann zeta function to multidimensional series associated with positive definite quadratic forms, defining what is now known as the Epstein zeta functions. He proved that these functions have  meromorphic continuations to the entire complex plane with  simple poles and derived a Riemann-type functional equation for it. The functional equation for the Epstein zeta function was later discussed in \cite{Te}.

The Epstein zeta functions found applications in different physical problems like
the Casimir effect, Bose-Einstein condensation, long-range interactions in quantum many-body systems
\cite{El,Ki}.

%


In this paper we consider the vector generalisation of the Epstein zeta functions attached to the vector
modular forms and the Weil representations. The closed construction was proposed by Stein \cite
{Ste}.

More specifically,
we consider the root lattices $Q$ of the ADE simple Lie algebras.
It should be noted that another class of zeta functions associated with root lattices is the Witten zeta functions \cite{KMH,Wi,Za}. This zeta function appears when calculating the partition function in the two-dimensional Yang--Mills theory. Later, to calculate the partition function in the two-dimensional Yang--Mills theory with a spontaneous broken gauge symmetry
 we modify the Witten zeta function \cite{LO}.

In our case we equip the root lattices  with even positive definite quadratic forms defined by the Gram matrices of the simple roots. Let $P$ is the weight lattice. This lattice is dual to the root lattice.
The quotient group $D=P/Q$ is the discriminant group. It is a finite abelian group of order
 $l=$ord$\,D$. This data allows one to define the Weil representation of the metaplectic group
$\Mp$. Let $\mathbb{H}\subset\mC$ be the upper half plane. The  metaplectic group acts in
space of holomorphic vector functions $\clO(\mathbb{H})^{l}$ bounded at infinity.
It is the Weil representations.
Let $S$ $T$  are the generators of inversion and the shift  and $C=S^2$ is the
central element.
\emph{The vector-valued theta-functions} (VVTF), constructed by this data,   belong to this space. The theta-vectors are $S$ and $T$ equivariant, but they are not necessarily $C$-invariant.
The $C$-invariant theta-vectors are modular forms of weight $k$, where $k=r/2$, and $r$ is the rank of of the lattice. The vector-valued modular forms
were appeared implicitly in the seminal paper of A.Weil \cite{We}.
This construction later appeared explicitly in the papers \cite{Bor} pages 164-173 and \cite{BG,Br,GN,Sch}.

The Mellin transform (MT) is well defined on subset of the theta vectors  that decay rapidly to infinity
(the cusp forms).
This subset is obtained from the set of the vector-valued theta-functions by subtracting the
fixed vector $e_0$. The Mellin  transform of this subset is proportional to \emph{the Epstein vector zeta functions} (EVZF). If the Epstein vector zeta functions are C-invariant, then they satisfy the matrix functional equation (\ref{fe}) (FE).
The FE follows from the S-equivariance of VVTE.
Thus, the C-invariance plays the role of selection rule for lattices for which the FE holds.

Schematically this construction looks as follows
\beq{01}
\begin{array}{c}

{\rm ADE~lattices}\to \,{\rm Weil~representation~of~}\Mp\to  \\
  \to{\rm\,VVTF\,}\to~ (C^{inv}\wedge{\rm cuspidality})\stackrel{\rm MT}\to {\rm EVZF}\to{\rm FE}
\end{array}
\eq


\section{Root lattices and Metaplectic group $\Mp$}

\subsection{Root and weight lattices}

Let $\gg$ be the simple simply-laced Lie algebra (the ADE types) and $R$ is the corresponding root system.
Consider the subsystem of simple roots $\Pi=\{\al_j\,|\,j=1,\ldots,r\}\subset R$, where $r$ is the rank of $\gg$. They form a basis in the Cartan subalgebra $\gh$.

The Gram matrix $\clR$ of the basis $\Pi$
\beq{gr}
\clR_{jk}=(\al_j,\al_k)
\eq
for the ADE algebras coincides with the Cartan matrix of the algebra $\gg$.

The fundamental weights $\Upsilon=\{\varpi_i\}$ is the system of vectors the dual to the systems of
simple roots $\Pi$.
\beq{du}
(\al_j,\varpi_a)=\de_{ja}\,.
\eq
Therefore, $G(\Upsilon)=G(\Pi)^{-1}$.

Let $Q$ be the root lattice
\beq{rt}
Q=\{\ga=\sum_{j=1}^rn_j\al_j\,,~~\al_j\in\Pi\,,~~n_j\in\mZ\}\,.
\eq
The the Gram matrix (\ref{gr}) defines the quadratic form $\clR(\ga)=(\ga,\ga)=\clR_{jk}n_jn_k$ on the lattice $Q$.
This form is  positive definite and even
\beq{el}
(\ga,\ga)\in 2\mZ\,.
\eq

 The weight lattice
$$
P=\{\be=\sum_{a=1}^rm_a\varpi_a\,,~~\varpi_a\in\Upsilon\,,~~m_a\in\mZ\}
$$
 is dual to the root lattice $Q^\vee=P$.
 The quotient $D(Q)=P/Q$ is called the discriminant group  of the root lattice. It is a finite cyclic group $\mZ_l=\mZ/l\mZ$ of order $l$ except
the algebra so(4n) ($D_{2n}$). In the latter case $P/Q\sim\mZ_2\times\mZ_2$. The order of $P/Q$ is
\beq{or}
l:={\rm ord}\,(P/Q)=\det\,\clR=|D(Q)|\,.
\eq

Define the subset of the fundamental weights $\ti \Upsilon\subset\Upsilon$ that
represent of non-identity cosets of P/QP/Q
P/Q."
\beq{bd}
\ti \Upsilon=
\{\bm{\varpi}=(\varpi_1,\ldots,\varpi_{l-1})\,,~\varpi_a\notin Q\,,~\varpi_a\neq\varpi_b~{\rm mod}~ Q\}\,.
\eq
These vectors along with $\varpi_0\sim Q$ form the induced basis in the quotient $D(Q)$
\beq{dq}
D(Q)\sim\varpi_0\cup\ti\Upsilon\,.
\eq

Let $\al_i$ be the simple roots dual to the weights $ \varpi_i\in\ti\Upsilon$ (\ref{bd}).
For the ADE root system the set of such $i$  takes the form
$$
\begin{tabular}{|c|c|c|}
    \hline
    \hline
    Root system &Set $i$& $D(Q)$\\
    \hline
    \hline
    $A_n$ & all $i$ & $\mZ_{n+1}$ \\
    \hline
       $D_n$ $n$ odd & $i=1,n-1,n$ & $\mZ_4$\\
     \hline
       $D_n$ $n$ even & $i=1,n-1,n$ & $\mZ_2\oplus\mZ_2$\\
    \hline
    $E_6$ & $i=1,6$ & $\mZ_3$\\
    \hline
    $E_7$ &$i=7$ & $\mZ_2$\\
    \hline
    $E_8$& no such $i$ & Id\\
    \hline
     \hline
  \end{tabular}
  $$
  \begin{center}
 \textbf{ Table 1} Non-trivial weights.
  \end{center}
 \textbf{ Remarks to Table 1.}\\
1. The numeration of roots  has been agreed with  \cite{Bo}.\\
2. For $D_n$ the weights $\varpi\in\ti\Upsilon$ define the vector, the left and the right
spinor representations.\\
3. There are the following  classes in $D(Q)$ for $E_6$
$$
[\varpi_2]=[\varpi_4]\in Q\,,~~[\varpi_1]=[\varpi_5]\,~~[\varpi_3]=[\varpi_6]\,,
$$
\beq{e6}
2[\varpi_1]=[\varpi_3]\,.
\eq


\subsection{Metaplectic group }

\paragraph{SL$(2,\mZ)$.}
Consider the
 group of the unimodular second order matrices
$$
{\rm SL}(2,\mZ)=\left\{g=\left(
                         \begin{array}{cc}
                           a & b \\
                           c & d \\
                         \end{array}
                       \right)\,,~~a,b,c,d\in\mZ\,,~~ad-bc=1\right\}\,.
$$
Let $\mathbb{H}=\{\tau\in\mC\,|\,\Im m\tau>0\}$ be the upper half plane.
The group SL$(2,\mZ)$ acts on $\tau$
 by the M\"{o}bius transformation
$$
\tau\to g\cdot\tau=\frac{a\tau+ b}{ c\tau+d}\,.
$$
In fact, this action is defined for the projective group ${\rm PSL}(2,\mZ)={\rm SL}(2,\mZ)/\pm Id_2$, where\\
$\pm Id_2$ is the center of ${\rm SL}(2,\mZ)$.

The group ${\rm SL}(2,\mZ)$ is generated by two matrices
\beq{st}
\bfs=\left(
    \begin{array}{cc}
      0 & -1 \\
      1 & 0 \\
    \end{array}
  \right)
  \,,~~\bft=\left(
           \begin{array}{cc}
             1 & 1 \\
             0 & 1 \\
           \end{array}
         \right)\,.
\eq
It means that $s$ and $t$  acts as $\tau\to-1/\tau$ and $\tau\to\tau+1$.
In addition, introduce the central element $\bfc=\bfs^2=-Id_2$.


 Define the representations of SL$(2,\mZ)$ in the space  $\clS^0$.
 The functions $f(\tau)$ from this space  satisfy the following conditions.
 Let $\tau=x+\imath y$. Then
 \beq{fd}
 \clS^0\{f(\tau)\,,~\tau\in \mathbb{H}\}\,,~~\left\{
 \begin{array}{ll}
  1.\, \lim_{y\to 0^+}y^n\left|\frac{\p^{a+b}f(\tau)}{\p^a_x\p^by}\right|=0 & \forall\, n\in\mathbb{N} \\
  2.\, {\rm sup}_{\tau\in \mathbb{H}}|\tau^my^{-n}\p^a_x\p_y^bf(\tau)|<\infty& \forall\, m,n,a,b\in\mathbb{N} \,.
 \end{array}
 \right.
\eq
The representation $\pi_k$ of ${\rm SL}(2,\mZ)$ for $k=0,1,2,\dots$ has the form
\beq{rep}
\pi_k(g) f(\tau)=\al(g,\tau)^{-k} f(g\cdot\tau)\,,~~\al(g,\tau)=c\tau+d\,.
\eq
The analytic prerfactor $\al(g,\tau)$ satisfies
the one-cocycle relation
\beq{cr}
\al(g_1g_2,\tau)=\al(g_1,g_2\cdot\tau)\al(g_2,\tau)\,.
\eq
For $k $ even the representation (\ref{rep}) is lifted from the linear representation of ${\rm PSL}(2,\mZ)$.


\paragraph{Metaplectic group $\Mp$.}

We need to define the action for $k$ being half-integer.
 To this end, consider a
 double cover of   SL$(2,\mZ)$.
It is the metaplectic group Mp$_2(\mZ)$.
This group is  the extension of  SL$(2,\mZ)$ by the group $\mZ_2=\{\pm 1\}$
  $$
  1\to \mZ_2\to {\rm Mp}(2,\mZ)\to {\rm SL}(2,\mZ)\to 1\,,
  $$
  The cohomology group responsible for this extension is $H^2({\rm SL}(2,\mZ),\mZ_2)$.

  The group  Mp$_2(\mZ)$ has  three generators $\ti\bfs:=S$,
 $\ti\bft:=T$ covering
$\bfs$ and $\bft$ (\ref{st}) and the center element  $\ti \bfc=C$
$({\rm Mp}_2(\mZ)=\{S\,,\,T\,,\,C\})$.
  They
satisfy the Mp$(2,\mZ)$ relations
\beq{m2}
1.S^2=C\,,~~2.(ST)^3=C\,,~~3.C^2=Id\,.
\eq



\subsection{Weil representations}

 The Weil representations of $\Mp$ is   the generalizations of the representations (\ref{rep}) of SL$(2,\mZ)$.
 Consider the vector-space
  \beq{fh}
  \clS^0_l=\{\phi\in\clO(\mathbb{H})^{l}\,|\,\phi\in\clS^0~(\ref{fd})\}\,.
  \eq

  Let $\alpha_k(g,\tau)$ be the analytic prefactor (\ref{rep}) and $\rho_l(g)$
  are the matrices of order $l$ (the Weil matrices).
  The Weil representation in the space $\clS^0_l$ is defined as
  \beq{wr7}
  \pi_k( g)f(\tau)=\alpha^{-1}_k(g,\tau)\rho_l(g)f(g\tau)\,,~~g\in\Mp\,,~~f\in\clS^0_l\,.
  \eq

 Define the distribution $(\clS^0)^\vee_l$ over $\clS^0_l$
 \beq{wr10}
 (\clS^0)^\vee_l=\{f\,|\,\lan f,\phi\ran<\infty ~\forall\,\phi\in\clS^0_l\,.
  \eq
  Here the brackets $\lan f,\phi\ran$ mean taking the integral over the upper half plane $\mathbb{H}$ and the trace over $\mC^l$.

Consider in details the analytic prefactors and the Weil matrices.

\paragraph{Analytic prefactors}
They are the functions on the upper half-plane (see (\ref{rep})) and take the form
\beq{alp}
\al_k(g,\tau)=(c\tau+d)^{k}\,,~~k\in\oh\mZ^+\,.
\eq
Due to (\ref{cr}) it is the one-cocycle and $[\al]\in H^1({\rm SL}(2,\mZ),\mC^*)$.

For $\Mp$ the relation (\ref{cr}) is replaced on
\beq{cr1}
\al_k(g_1g_2,\tau)=\al_k(g_1,g_2\cdot\tau)\al_k(g_2,\tau)\om_k(g_1,g_2)\,,
\eq
where $\om_k(g_1,g_2)$ is the two-cocycle $[\om_k(g_1,g_2)]\in H^2({\rm SL}(2,\mZ),\mZ_2)$.


 On the generators $(\bfs,\bft,\bfc)\in$SL$(2,\mZ))$ the prefactors $\al_k$ (\ref{alp}) take the form
 \begin{subequations}\label{api}
  \begin{align}
 &\al_k(\bfs|\tau)=\tau^{k}\,,\\
 & \al_k(\bft|\tau)=1\,,\\
 &\al_k(\bfc|\tau)=e^{\pi\imath k}=\bfe(k)\,.
 \end{align}
  \end{subequations}

  Since $\al_k(\bfs|\bfs\tau)=e^{\pi\imath k}\tau^{-k}$, we have $\al_k(\bfs^2|\tau)=e^{\pi\imath k}$.
  Thus, $\al_k(\bfs^2|\tau)=\al_k(\bfc|\tau)$. Because $\rho_k(\bfs^2)=\rho_k(\bfc)$, the first relation in (\ref{m2}) is fulfilled. The last relation in (\ref{m2})  is provided by the nontrivial cocycle $\om_k(\bfc,\bfc)$.
  As a result,
the two-cocycle takes the form
 \begin{subequations}\label{om}
  \begin{align}
 &\om_k(\bfs,\bfs)=1\,,\\
 & \om_k(\bft,g)=1\,,~\forall g\\
 &\om_k(\bfc,\bfc)=e^{-2\pi\imath k}\,.
 \end{align}
  \end{subequations}
  Since  $\dim\,H^2({\rm SL}(2,\mZ),\mZ_2)=1$ the cocycle $\om_k(\bfc,\bfc)$ represents
  the  only one nontrivial class in $H^2({\rm SL}(2,\mZ),\mZ_2)$.
Note that due to (\ref{api}c) and (\ref{om}c) $\al_k(\bfc^2)=e^{4\pi\imath k}=1$.


 \paragraph{Weil matrices $\rho_l(\bfs)$ and  $\rho_l(\bft)$.}
Let $l=|D(Q)|$ is the determinant of Gram matrix (see Table 1).  In the space $\mC^l$ fix the
nontrivial fundamental weights basis
  \beq{cl}
  \mC^{l}=\{x^0\varpi_0+\sum_{a=1}^{l-1}x^a\varpi_a\,,~~\varpi_a\in\ti\Upsilon\,,~(\ref{bd})\}\,.
  \eq

Define three Weil matrices $\rho_l$ acting in $\mC^l$ corresponding to the three generators of $\Mp$
\beq{rks}
S\to\rho_l(\bfs)\,,~~\rho_l(\bfs)=
\f1{\sqrt{|D(Q)|}}\Phi\,,~
(\Phi)_{ab}=\bfe(-2(\varpi_a,\varpi_b))\,,~~(\bfe(x)=e^{\pi\imath x})
\footnote{We use here a non-standard notation for the exponent $\bfe$.}\,.
\eq
In particular, for $l=2$
\beq{r2}
\rho_2(\bfs)=\f1{\sqrt{2}}\left(
               \begin{array}{cc}
                  1& 1 \\
                 1 & -1 \\
               \end{array}
             \right)\,.
\eq
 The generator $T$ is represented by the diagonal matrix
\beq{TT}
T\to\rho_l(\bft)\,,~~
(\rho_l(t))_{a,a}=\bfe(-(\varpi_a,\varpi_a))\,.
\eq
\noindent
\paragraph{Weil matrices $\rho_l(\bfc)$.}
Because $C=S^2$, the Weil matrix representing the center  $\rho_l(\bfc)$  is defined as
\beq{CC}
\rho_l(\bfc)=\rho^2_l(\bfs)=\f1{D(Q)}\Phi^2\,.
\eq
It follows from (\ref{rks}) that
\beq{per}
\rho_l(\bfc)=\f1{|D(Q)|}(\Phi^2)_{ab}=
\f1{|D(Q)|}\sum_{c\in\ti\Upsilon}\bfe(-2((\varpi_a+\varpi_b,\varpi_c)))=\de_{(\varpi_a+\varpi_b,Q)}\,.
\eq
Thus
\beq{rok}
\rho_l(\bfc)=\left(\begin{array}{cc}
         1 & 0 \\
         0 & {\rm Inv}_l
       \end{array}\right)\,,~~({\rm Inv}_l)_{ab}=\de_{(\varpi_a+\varpi_b,Q)}\,.
\eq
 For $l=2$ (\ref{r2}) implies
\beq{rc2}
\rho_2(\bfc)=Id_2\,.
\eq
\paragraph{ l=2: A$_1$, E$_7$}.\\
In the cases $A_1$ and E$_7$
the discriminant group $D(Q)$ is isomorphic to $\mZ_2$.
The set $\ti\Upsilon$ has only one weight
$\varpi_1$. Therefore, $(\varpi_1+\varpi_1)\in Q$.
 It follows from (\ref{per}) that
\beq{roh}
\rho_{2}(\bfc)=Id_2\,.
\eq

\paragraph{A$_n$}.\\
It follows from Table  that for this root system $l=n+1$, $D(Q)\sim\mZ_{n+1}$.
 The fundamental weights $\Upsilon$ dual to the system of simple roots
 $$
 \Pi=\{\al_j=(\underbrace{0,\ldots,0}_{j-1},1,-1,\underbrace{0,\ldots,0}_{n-j}),~j=1,\ldots,n\}
 $$
 are
 \beq{fw}
 \Upsilon=\{\varpi_a=\f1{n+1}(\underbrace{n-a+1,\ldots,n-a+1}_{{a}} \underbrace{-a,\ldots,-a}_{{n+1-a}})\,,
 ~a=1,\ldots,n\}\,.
 \eq
 In this case $\ti\Upsilon=\Upsilon$.
 Thereby, the matrix $\rho_l(\bfc)$ has the form (\ref{rok}) and the order $n+1$.

 \paragraph{D$_{2m}$}.\\
For $D_{2m}$ $D(Q)\sim\mZ_2\oplus\mZ_2$ and there are two fundamental weights in $\ti\Upsilon$
corresponding to the left and right spinor representations. From (\ref{rc2}) we have
\beq{roh1}
\rho_{4}(\bfc)=Id_2\oplus Id_2\,.
\eq

In the rest cases the discriminant groups $D(Q)$ are isomorphic to $\mZ_l$, $l>2$.
It is possible to choose a weight $\varpi_1$ such the  weights in $\ti\Upsilon$
take the form $\varpi_a=a\varpi_1$, $a=1,\ldots,l$.
It means that in (\ref{rok})
\beq{rok1}
({\rm Inv}_l)_{ab}=\de_{a+b,0\,({\rm mod}\,l)}\,.
\eq

\paragraph{D$_{2m+1}$}.\\
In this case $D(Q)\sim\mZ_4$, $l=4$.
Choose weights in $\varpi_a$ in $\ti\Upsilon$ as follows.
The Cartan subalgebra of so$(4m+2)$ is  the space $\mC^{2m+1}$. The weights in the set $\ti\Upsilon$ are
identified with the vectors
\beq{sow}
 \varpi_1=\oh(1,1,\ldots,-1)\,,~~
\varpi_2=(1,0,\ldots,0)\,,~~\varpi_3=\oh(1,1,\ldots,1)\,.
\eq
 Here $\varpi_1$ and $\varpi_3$
are the weights of spinor representations with opposite chirality, and  $\varpi_2$  is basic weight of the vector representation.
Then $\varpi_k$=$k\varpi_1$ (mod$\,Q$), $k=0,1,2,3$.
The Gram matrix of this basis is
$$
\rho_4(\bfs)=\oh\left(
              \begin{array}{cccc}
                1 & 1 & 1 & 1 \\
                1 & \imath(-1)^m & -1 & -\imath(-1)^m \\
                1 & -1& 1 & -1 \\
                1 & -\imath(-1)^m & -1 &\imath(-1)^m  \\
              \end{array}
            \right)
$$
 It can be checked that
the matrix $\rho_4(\bfc)=\rho^2_4(\bfs)$ has the form (\ref{rok}).
\paragraph{E$_6$}.\\
For $E_6$ the quotient $P/Q=D(Q)\sim\mZ_3$.
According with Remarks to Table 1 the set $\ti\Upsilon$ contains two weights  $\varpi_1$ and $\varpi_6$.
 The Gram matrix of this basis is
$$
(\varpi_1,\varpi_1)=(\varpi_6\varpi_6)=\frac{4}3\,,~~(\varpi_6,\varpi_1)=\frac{2}3\,.
$$
 It implies
 $$
\rho_3(\bfs)=\f1{\sqrt{3}}\left(
              \begin{array}{ccc}
                1 & 1 & 1  \\
                1 & \om & \om^2   \\
                1 & \om^2 & \om  \\
              \end{array}
            \right)\,,~~~\om=\exp\frac{2\pi\imath}3\,.
$$

  The matrix $\rho_3(\bfc)$ has the form (\ref{rok})
  and the order $3$.
  \paragraph{E$_8$}.\\
  For E$_8$ $P=Q$ and therefore $\rho_1(\bfc)=1$.


  \paragraph{Weil representation}.\\
  Let's derive the specific formulas for the  generators $S$, $T$ and $C$
  of  the Weil representation  (\ref{wr7}).
   Using (\ref{api}) we find
  \begin{subequations}\label{re5}
  \begin{align}
 &  \pi_k(S)F(\tau)=\tau^{-k}\rho_l(\bfs)F(-1/\tau)\,,\\
 &    \pi_k(C)F(\tau)=e^{-\pi ik}\rho_l(\bfc)F(\tau)\,,\\
 &\pi_k(T)F(\tau)=\rho_l(\bft)F(\tau+1)\,.
 \end{align}
  \end{subequations}



\subsubsection{C-invariance}

Let $\bfx_l=(x_0,x_1,\ldots,x_{l-1})$ be a vector in $\mC^{l}$ (\ref{cl}).
Its C-invariance means that
\beq{ci}
\rho_l(\bfc)\bfx=e^{\pi ik}\bfx\,,~~k=r/2\,.
\eq
Let $\mC^l_C$ be the subspace of $\mC^l$ satisfying this condition
\beq{clc}
\mC^l_C=\{\bfx\,|\,\rho_l(\bfc)\bfx=e^{\pi ik}\bfx\}\,.
\eq

\vspace{-0.3cm}
\paragraph{Algebras that do not permit C-invariant vectors}.\\
First prove that  in the following four cases the invariant subspace is trivial $\mC^l_C=\{\bfx=0\}$.
\vspace{-0.7cm}
\subparagraph{A$_1$ and E$_7$}.\\
In this case $l=2$, $k=1$ for SL(2) and $k=7/2$ for E$_7$.
It follows from (\ref{roh}) that $\rho_2(\bfc)=Id_2$.
 The invariance condition (\ref{ci}) means that
 $\bfx=-\bfx$ for  SL(2) and $\bfx=-\imath \bfx$ for E$_7$. Thus,
 \beq{se}
\mC^2_C({\rm SL}(2))=\{\bfx=0\}\,,~~\mC^2_C({\rm E}_7)=\{\bfx=0\}\,
\eq
\vspace{-0.5cm}
\subparagraph{ A$_{2m+1}$}.\\
In this case  $k=m+\oh$, $l=2m+2$ .It follows from (\ref{rok}) and (\ref{ci}) that
$$
\rho_l(\bfc)\bfx_{2m+2}=e^{\pi im}\imath\bfx_{2m+2}
$$
Since $\rho^2_l(\bfc)=Id_l$, we have $-\bfx_{2m+2}=\bfx_{2m+2}$
 and therefore
\beq{sloo}
\mC^{2m+2}_C({\rm SL}(2m+2))=\{\bfx_{2m+2}=0\}\,.
\eq
\vspace{-0.5cm}
\subparagraph{D$_{2m+1}$}.\\
Now $l=4$ and  $k=m+1/2$. This case is similar to the previous one
because the phase factor $e^{\pi\imath k}$ is proportional to $\imath$.
Therefore,
\beq{so1}
\mC^4_C({\rm D}_{2m+1})=\{\bfx=(0,0,0,0)\}\,.
\eq

\paragraph{Algebras that have the C-invariant vectors}.\\
We list the nontrivial cases.
\vspace{-0.8cm}
\paragraph{A$_n$}.\\
$l=n+1$, $k=n/2$.\\
It follows from (\ref{ci})) that the C-invariance means that
$$
(x_0,x_n,\ldots,x_{1})=e^{\pi\imath n/2}(x_0,x_1,\ldots,x_{n})\,.
$$
 $\bf{n=4m}$, $k=2m$, $l=4m+1$, $\bfx=(x_0,x_1,\ldots,x_{4m})$. \\
  The invariant subspace has the form
\beq{sle}
\mC^{4m+1}_C({\rm A}_{4m+1})=\{\bfx=(x_0,x_1,\ldots,x_{2m},x_{2m},\dots,x_1)\}\,.
\eq
 $\bf{n=4m+2}$,  $k=2m+1$, $l=4m+3$, $\bfx=(x_0,x_1,\ldots,x_{4m+2})$. \\
\beq{slo}
\mC^{4m+3}({\rm A}_{4m+2}))=\{\bfx_{4m+3}=(0,x_1,\ldots,x_{2m+1},-x_{2m+1},\dots,-x_1)\}\,.
\eq
 \vspace{-0.6cm}

\paragraph{D$_{2m}$}.\\
$l=4$, $k=m$. It follows from (\ref{roh1})
$$
\rho_4(\bfc)(x_0,x_1,x_{2},x_{3})^T=(x_0,x_{1},x_2,x_3)^T\,.
$$
Thus, the C-invariance does not impose any constraints for this algebra
\beq{sof}
\mC^4_C({\rm D}_{2m})=\mC^4\,.
\eq
\vspace{-0.8cm}
\paragraph{E$_6$}.\\
$l=3$, $k=3$, $\mC^3(E_6)=\{\bfx_3=x_0,x_1,x_6\}$,
$\rho(\bfc)\bfx=-(x_0,x_6,x_1)$. Thus, the C-invariant subspace has the form
\beq{s6}
\mC^3(E_6)=\{0,x_1,-x_1)\}\,.
\eq
\vspace{-0.8cm}
\paragraph{E$_8$}.\\
$l=1$, $k=4$. The condition (\ref{ci}) is satisfied for  $\mC$.




\section{Theta vectors}

\subsection{Definition and modular properties}

By means of the quadratic form $\clR$ (\ref{gr})
 define $l=ord D(Q)$ theta series  attributed to the set of fundamental weights $\ti\Upsilon$ (\ref{bd})
\beq{tc}
\te_a(\tau)=\sum_{\ga\in Q+\varpi_a}\bfe(\tau(\ga,\ga))\,,~~
\varpi_0=0\,,~-\varpi_a\in\ti\Upsilon\,,\,.
\eq
For $a\neq 0$ $\te_a(\tau)\in (\clS^0)^\vee_l$ (\ref{wr10}).
Rewrite these series as
\beq{tc1}
\te_a(\tau)=\sum_{\ga\in Q}\bfe(\tau\clR_a(\ga))\,,~~(a=0,1,\ldots,l)\,
\eq
where
\beq{ra}
\clR_a(\ga)=(\ga+\varpi_a,\ga+\varpi_a)=(\ga,\ga)+2(\ga,\varpi_a)+(\varpi_a,\varpi_a)\,.
\eq
Applying the Poisson summation formula we obtain
\beq{sto}
\te_a(-1/\tau)=
\frac{(-\imath\tau)^{k}}{l^{1/2}}
\sum_{\varpi_b\in\bm{\varpi}}
\te_b(\tau)\bfe(2(\varpi_a,\varpi_b))\,,~~(k=r/2\,,~\bfe(x)=e^{\pi\imath x})\,.
\eq

Define  the column theta vector
\beq{tc2}
\bm{\Theta}(\tau)=(\te_0(\tau),\te_1(\tau),\ldots,\te_{l-1}(\tau))^T\,.
\eq
It can be considered as an element of $(\clS^0)^\vee_l$ (\ref{wr10}).
The Weil representation (\ref{wr7}) on the theta vectors take the form
\beq{wt}
  \pi_k(\tilde{g})\bm{\Theta}(\tau)=\alpha^{-1}_k(g,\tau)\rho_l(g)\bm{\Theta}(g\tau)\,.
  \eq
From  (\ref{sto}) we obtain
\beq{st11}
\bm{\Theta}(-1/\tau)=(-\imath\tau)^{k}\rho^{-1}_l(\bfs)\bm{\Theta}_l(\tau)\,.
\eq
It follows from (\ref{re5}a) that the vector $\bm{\Theta}(\tau)$ is equivariant with respect to the $S$ action.

Consider the action of the $T$-generator.
It follows from (\ref{tc1}) that
$$
\te_a(\tau+1)=\sum_{\ga\in Q}\bfe(\tau\clR_a(\ga))\bfe(\clR_a\emph{}(\ga))\,.
$$
Because the root lattice is even (\ref{el}) and the fundamental weights are dual to the
simple roots the first two terms in (\ref{ra}) are even integers. Thereby
$$
\te_a(\tau+1)=\te_a(\tau)\bfe(\varpi_a,\varpi_a)\,.
$$
Then
 \beq{teq}
 \bm{\Theta}_l(\tau+1)=\rho^{-1}_l(\bft)\bm{\Theta}_l(\tau)\,,
\eq
where $\rho_l(\bft)$ is (\ref{TT}). Thus, $\bm{\Theta}_l(\tau)$ is equivariant with respect to
the $T$ action.


\subsection{C-invariant theta vectors}

It follows from the above that the C-invariant theta vectors are equivariant with respect
to the action of the  entire group $\Mp$. Thus, the C-invariant theta vectors are the modular forms.

\paragraph{List of  the C-invariant theta vectors}

We obtain the list using the classification (\ref{sle})--(\ref{s6}) and E$_8$.
\vspace{-0.5cm}
\paragraph{ A$_{n}$, n=4m}.\\
\beq{sl1}
\bm{\Theta}_{A_{4m}}(\tau)=(\te_0,\te_1,\ldots,\te_{2m},\te_{2m}\ldots\te_1)\,.
\eq
\vspace{-0.8cm}
\paragraph{A$_{n}$,  n=4m+2}.\\
\beq{sl2}
\bm{\Theta}_{A_{4m+2}}(\tau)=(0,\te_1,\ldots,\te_{2m+1},-\te_{2m+1},\ldots,-\te_1)\,.
\eq
\vspace{-0.8cm}
\paragraph{D$_{2m}$}.\\
\beq{so121}
\bm{\Theta}_{D_{2m}}(\tau)=(\te_0,\te_1,\te_{2},\te_{3})\,.
\eq
\vspace{-0.8cm}
\paragraph{E$_6$}.\\
\beq{se60}
\bm{\Theta}_{{\rm E}_6}=(0,\te_1,-\te_1)\,.
\eq
\vspace{-0.8cm}
\paragraph{E$_8$}.\\
\beq{se66}
\bm{\Theta}_{{\rm E}_8}=\te_0\,.
\eq


\subsection{Modular and cusp forms }

Define the space $\clF_l$ of holomorphic maps of the upper half plane to $\mC^l$
\beq{mup}
\clF_l=\{\bff\in\clO(\mathbb{H},\mC^l\}\,.
\eq
In the basis (\ref{cl})
$$
\bff=\sum_{a=0}^{l-1}f_a(\tau)\varpi_a
$$
the components $f_a(\tau)$ are holomorphic.

\paragraph{Modular forms}
\emph{The vector valued \textbf{modular form}  $\bm{\Phi}_k(\tau)$  of the weight $k$ is the vector
in the space $\clF_l$
equivariant with respect to the  $\Mp$ action} (\ref{wr7})
\beq{mf}
\bm{\Phi}_k(\ti g\cdot\tau)=\pi^{-1}_k(\ti g)\bm{\Phi}_k(\tau)\,,~~
(\bm{\Phi}_k(g\cdot\tau)=\al_k(g,\tau)\rho^{-1}_k(g)\bm{\Phi}_k(\tau))\,.
\eq
Denote the space of modular forms $M_{k,l}$.

To be more specific the modular forms  of the weight $k$ satisfies the equations
\begin{subequations}\label{re7}
  \begin{align}
 & F(-1/\tau)=\tau^{k}\rho^{-1}_l(\bfs)F(\tau)\,,\\
 &F(\tau+1)=\rho^{-1}_l(\bft)F(\tau)\,  ,\\
 & F(\tau)=e^{\pi ik}\rho^{-1}_l(\bfc)F(\tau)\,.
 \end{align}
  \end{subequations}
The last equation means that the modular forms should be $C$-invariant.
It follows from (\ref{st11}) and (\ref{teq})  the C-invariant theta vector are modular forms
from the space $M_{k,l}$.

\paragraph{Cusp forms.}
The cusp forms are the  forms that vanish at the cusp $y\to\infty$ ($y=\Im m\,\tau$).
They belong to the subspace
the space $\clF_l$ of holomorphic maps of the upper half plane to $\mC^l$
\beq{mup0}
\clF^0_l=\{\bff\in\clF_l\,(\ref{mup})\,|\,\lim_{y\to\infty}\bff=0\}\,.
\eq

The cusp is invariant under the  action of the subgroup $\Gamma$ generated by $T$.
Denote the space of the cusp forms $C_{k,l}\subset M_{k,l}$
\beq{ck}
C_{k,l}=\{\bm{\Phi}\in M_{k,l}\,|\,\bm{\Phi}_k|_{\infty}=0\}\,.
\eq
Define the map $ev_\infty\,:\,\bm{\Phi}_k\to\bm{\Phi}_k|_{\infty}$.
This map can be included in the exact sequence
$$
0\rightarrow C_{k,l}\rightarrow  M_{k,l}\xrightarrow{ev_\infty}V_\infty\rightarrow 0
$$

The theta series are the cusp forms if the constant term vanishes.
It follows from (\ref{tc}) that $\te_a$ for $a\neq 0$ are the cusp forms,
 while $\te_0$ is not a cusp form. Thereby the vector
\beq{mp1}
\bm{\Psi}(\tau)=\bm{\Theta}(\tau)-e_0\,,~~e_0=(1,0,\ldots,0)^T\
\eq
looks like a cusp form vector. But the vector $e_0$ is not the modular form and
the substraction of $e_0$ breaks the equivariance of $\bm{\Psi}(\xi)$
\beq{mp5}
\bm{\Psi}(\tau)=\pi_k(g)\bm{\Psi}(\tau)+c(g)\,,~~c(g)=(\pi_k-Id_l)e_0\,.
\eq
Here $c(g)$ is the one-cocycle $[c(g)]\in H^1(\Mp,V)$, where $V$ is the $\Mp$ module.
If $V=\mC^l$ or $\clF_l$ this cocycle is trivial. But for $\clF_l^0$ the cocycle is non-trivial.

\begin{rem}
 Note that the theta vectors live in  $(\clS^0)^\vee_l$ (\ref{wr10}),
  being modular forms, are recovered as the intersection of
 $(\clS^0)^\vee_l$ with $\clF_l$.
\end{rem}


\section{Epstein zeta functions}

\subsection{Definition}

The definition of Epstein zeta function on  the multi-dimensional lattices
 was given by Epstein himself \cite{Ep}.
 Here we specialise the lattices to be the root ADE lattices and the quadratic forms
 are the Gram matrices (\ref{gr}).

By means the shifted quadratic forms (\ref{ra})
define $l$ Epstein zeta functions
\beq{ez}
\zeta_a(s)=
\left\{
\begin{array}{cc}
  \sum_{\ga\in Q}\f1{(\clR_a(\ga))^{s}} & a\neq 0\\
  \sum_{\ga\in Q\setminus 0}\f1{(\clR(\ga))^{s}} & a=0
\end{array}
\right.
\eq
Our main interest the Epstein vector zeta function
$$
\bm{\zeta}_l(s)=(\zeta_0(s),\ldots,\zeta_{l-1}(s)))^T\,.
$$
For convenience normalise it as
\beq{mz1}
\bm{\Xi}_l(s)=\frac{\G(s)}{\pi^s}\bm{\zeta}(s) \,.
\eq

The purpose of this section is
 to establish the matrix functional equation that the vectors $\bm{\Xi}_l(s)$ satisfy
 under special conditions.

\vspace{-0.5cm}
\paragraph{Representations of $\Mp$ related to the rotated the complex plane}.\\
Let $\xi=-\imath\tau$. Since $\tau\in\mathbb{H}$, $\xi\in\mC^+=\{\xi\in\mC\,|\,\Re e\xi>0\}$.
The passage from $\tau$ to $\xi$ corresponds to the conjugation of the matrix
$g(a,b,c,d)\in$SL$(2,\mZ)$ by the diagonal matrix
$\di(e^{\pi\imath/4},e^{-\pi\imath/4})$. It implies the replacement $g(a,\imath b,-\imath c,d)$.

 The group SL$(2,\mZ)$ acts on the plane  $\Re e\xi>0$ as
$$
\bfs\xi=1/\xi\,,~~\bft\xi=\xi-\imath\,,~~\bfc\xi=\xi\,.
$$

Let $\clS'$ be the space obtained from the space $\clS^0$ (\ref{fd}) after the rotation
\beq{ss}
\clS^0\xrightarrow{\tau\to \xi}\clS'\,,~~(\xi=y-\imath x)
\eq
 and $\clS'_l$ is its vector generalisation (\ref{fh}).
   The representation of the generator $S$ in $\clS'_l$ takes the form
$$
\al_k(\bfs|\xi)=e^{\pi\imath k/2}\xi^k\,,~~
$$
\beq{sac}
\pi_k(S)f(\xi)=A(S|\xi)\rho_l(\bfs)f(1/\xi)\,,~~ A(S|\xi)=e^{-\pi\imath k/2}\xi^{-k}
\eq
For the central element $C$ we have
\beq{cac}
\al_k(\bfc|\xi)=e^{\pi\imath k}\,,~~
\pi_k(C)f(\xi)=A(C)\rho_l(\bfc)f(\xi)\,, ~~A(C)=e^{-\pi\imath k}
\eq
The both actions $\pi_k(S)$ and $\pi_k(C)$ preserve $\clS_l'$.

Let $(\clS_l')^\vee$ be the space of distributions on $\clS_l'$. Then the
the dual operators $\pi\vee_k(S)$ and $\pi^\vee_k(C)$ act on $(\clS_l')^\vee$.

\vspace{-0.5cm}
\paragraph{The theta series on the right half-plane $\mC^+$}.\\
Rewrite the theta function (\ref{tc1}) in terms of the variable $\xi:\,$   $\te_a(\tau)\stackrel{\tau=\imath\xi}{\to}\te_a(\xi)$
\beq{th}
\te_a(\xi)=\sum_{\ga\in Q}e^{-\pi\xi\clR_a(\ga)}\,,~~
\bm{\Theta}(\xi)=(\te_0(\xi),\te_1(\xi),\ldots,\te_{l-1}(\xi))^T\,.
\eq
The equivariance of $\bm{\Theta}(\xi)$
with respect to the $S$ action takes the form
\beq{st1}
\bm{\Theta}(1/\xi)=e^{\pi\imath k/2}\xi^{k}\rho^{-1}_l\bm{\Theta}(\xi)\,.
\eq
The C-invariance means that
\beq{ci1}
\bm{\Theta}(\xi)=e^{\pi\imath k}\rho^{-1}_l(\bfc)\bm{\Theta}(\xi)\,.
\eq

\subsection{Mellin transform}


The Mellin transform is defined for $f\in\clS'$ (\ref{ss})
\beq{me1}
\clM f(s)=\hat f(s)=\int_0^\infty\xi^{s}f(\xi)\frac{d\xi}\xi\,.
\eq
Let $\widetilde{\clS'}=\clM(\clS')$ be the image of $\clS'$. It is the space of
entire functions with very rapid decay on vertical lines.
The Mellin image of the distributions $\widetilde{\clS'}^\vee=\clM(\clS')^\vee$ is the space of meromorphic functions with polynomial growth on vertical lines (see details in \cite{IK}).
In particular, the Mellin transform of the distribution $\xi^{-k}$ takes the form
\beq{ded}
\clM(\xi^{-k})(s)=\de(s-k)\,.
\eq



\paragraph{Mellin intertwiners}.\\

Consider the subgroup  $\mZ_4\subset\Mp$ generated by $S$ and $C$.
It follows from (\ref{sac}), (\ref{cac}) and (\ref{ded}) that
Let $\bff\in(\clS_l')^\vee$.
\beq{me2}
\clM(\pi^\vee_k(S)\bff(\xi))=\widetilde{\pi^\vee_k}(S) \widetilde{\bff}(s)=e^{-\pi\imath k/2}\rho_l(\bfs) \widetilde{\bff}(s-k)\,,~~\widetilde{\bff}\in\widetilde{\clS'}^\vee
\eq
\beq{me3}
\clM(\pi^\vee_k(C)\bff(\xi))=\widetilde{\pi^\vee_k}(C) \widetilde{\bff}(s)=
e^{-\pi\imath k}\rho_l(\bfc)\widetilde{\bff}(s)\,.
\eq
Thus, the Mellin transform intertwines representations
$\widetilde{\pi^\vee_k}\clM=\clM\pi^\vee_k$.


\paragraph{C-invariance}
Define the subspace $(\clS_l'^C)^\vee$ of the $C$-invariant vectors in $(\clS_l')^\vee$
 \beq{e0}
(\clS_l'^C)^\vee=\{\bff\in(\clS_l')^\vee\,|\,\bff=\pi^\vee_k(C)\bff\}\,.
\eq
or
$$
\bff(\xi)=e^{-\pi\imath k}\rho_l(\bfc)\bff(\xi)\,.
$$
 Since the action of the operator $\pi_k(C)$
on $\bff(\xi)$ coincides with the action of  $\pi_k(C)$
on $F(\tau)$ (\ref{re7}b) the form the subspace $(\clS_l'^C)^\vee$ is defined by the invariant space
$\mC^l_C$ (\ref{clc}) described in {\bf 2.4.1} (\ref{se})--(\ref{s6}).

\bigskip
\noindent
\paragraph{Mellin transformation of vector theta series.}
The Mellin transform of the exponents in the series (\ref{th}) is defined as
\beq{mt13}
\clM(\exp -\pi\xi \clR_a(\ga))(s)=\int_0^\infty\xi^{s-1}\exp (-\pi\xi \clR_a(\ga))d\xi=
\frac{\G(s)}{\pi^s}\f1{ (\clR_a(\ga))^{s}}\,.
\eq

 It follows from (\ref{ez}) that
\beq{mte}
\clM(\te_a))(s)=\frac{\G(s)}{\pi^s}\zeta_a(s)\,,~a\neq 0\,.
\eq
The theta series $\te_0$ (\ref{th}) have the constant term.
To define the component $\zeta_0(s)$ we shift the theta function $\te_0(\xi)-1$.
 Its Mellin transform
defines the Epstein zeta function
\beq{ep0}
\clM(\te_0-1)(s)=\frac{\G(s)}{\pi^s}\zeta_0(s)\,.
\eq
It means that we pass to the vector
$\bm{\ti\Psi}(\xi)$ corresponding to the vector
$\bm{\Psi}(\tau)$ (\ref{mp1})
\beq{mp}
\bm{\ti\Psi}(\xi)=\bm{\Theta}(\xi)-e_0\,.
\eq

As we mentioned earlier, the substraction of $e_0$ is not compatible with the equivariance.
The $S$-equivariance of $\bm{\Theta}(\xi)$ implies (see(\ref{sac})) that
 \beq{sp}
\bm{\ti\Psi}(\xi)=
A(S|\xi)\rho_l(\bfs)\bm{\ti\Psi}(1/\xi)+A(S|\xi)\rho_l(\bfs)e_0-e_0\,.
\eq

It follows from (\ref{mte}) and (\ref{ep0}) that the  Mellin transform of $\bm{\ti\Psi}(\xi)$
is the vector $\bm{\Xi}(s)$
(\ref{mz1})
\beq{pr}
\bm{\Xi}(s)=\clM(\bm{\ti\Psi})(s)=\int_0^\infty\xi^{s-1}
\bm{\ti\Psi}_l(\xi)d\xi\,.
\eq



\subsubsection{C-invariant $ \bm{\Xi}_l(s)$-vectors }

 The vector $ \bm{\zeta}_l(s)$ is C-invariant if
\beq{cks}
\widetilde{\pi^\vee_k}(C)\bm{\zeta}_l(s)=\bm{\zeta}_l(s)\,,
\eq
where the operator $\widetilde{\pi^\vee_k}(C)$ is defined by (\ref{me3}).
Since this operator as well $\pi^\vee_k(C)$ depends only on $k$ and $\rho_l(\bfc)$
the C-invariant vectors  $\bm{\zeta}_l(s)$ have the same form as C-invariant theta vectors
(\ref{sl1})--(\ref{se66}). The C-invariant set of the Epstein vector zeta functions takes
the form
\vspace{-0.5cm}
\paragraph{A$_{n}$, n=4m }.\\
\beq{sz1}
\bm{\zeta}_{{\rm sl}(4m+1)}(s)=(\zeta_0,\zeta_1,\ldots,\zeta_{2m},\zeta_{2m}\ldots\zeta_1)\,.
\eq
\vspace{-0.8cm}
\paragraph{A$_{n}$, n=4m+2}.\\
\beq{sz2}
\bm{\zeta}_{{\rm sl}(4m+3)}(s)=(0,\zeta_1,\ldots,\zeta_{2m+1},-\zeta_{2m+1},\ldots,-\zeta_1)\,.
\eq
\vspace{-0.8cm}
\paragraph{D$_{n}$, n=2m}.\\
\beq{so12}
\bm{\zeta}_{{\rm so}(4m)}(s)=(\zeta_0,\zeta_1,\zeta_{2},\zeta_{3})\,.
\eq
\vspace{-0.8cm}
\paragraph{E$_6$}.\\
\beq{se61}
\bm{\zeta}^C_{{\rm E}_6}(s)=(0,\zeta_1,-\zeta_1)\,.
\eq
\vspace{-0.8cm}
\paragraph{E$_8$}.\\
\beq{se6}
\bm{\zeta}_{{\rm E}_8}(s)=\zeta_0\,.
\eq


\subsection{Functional equation}

Consider the distribution $e_0/s\in\widetilde{S}^\vee_l$.
Define the shifted  zeta vector
\beq{mz}
\bm{\widehat{\Xi}}^C(s,k)=\bm{\Xi}^C(s,k)-\bfe(-k/2)\rho_l(S)\frac{e_0}s
\eq


\begin{predl}
The   vector $\bm{\widehat\Xi}(s,k)$
satisfies the  matrix functional equation
\beq{fe}
\fbox{$\bm{\widehat\Xi}(2k-s,k)=\widetilde{\pi^\vee_k}(S)\bm{\widehat\Xi}(s,k)$}\,.
\eq
\end{predl}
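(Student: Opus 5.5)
This is Riemann's classical argument: split the Mellin integral (\ref{pr}) at $\xi=1$ and use the $S$-equivariance (\ref{st1}) of the theta vector. Write
$$
\bm{\Xi}(s)=\int_1^\infty\xi^{s-1}\bm{\ti\Psi}(\xi)\,d\xi+\int_0^1\xi^{s-1}\bm{\ti\Psi}(\xi)\,d\xi .
$$
The first integral is entire in $s$, because $\bm{\ti\Psi}(\xi)$ decays exponentially as $\xi\to\infty$: every $\te_a$ with $a\neq0$ and $\te_0-1$ lose their constant terms. For the second integral, substitute $\xi\to1/\xi$ to turn it into $\int_1^\infty\xi^{-s-1}\bm{\ti\Psi}(1/\xi)\,d\xi$. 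Then express $\bm{\ti\Psi}(1/\xi)=\bm{\Theta}(1/\xi)-e_0$ through (\ref{st1}), equivalently through (\ref{sp}), as
$$
e^{\pi\imath k/2}\xi^{k}\rho_l^{-1}(\bfs)\bigl(\bm{\ti\Psi}(\xi)+e_0\bigr)-e_0 .
$$
This gives a cusp part $\int_1^\infty\xi^{k-s-1}(\cdots)\bm{\ti\Psi}(\xi)\,d\xi$, which is again entire, plus two elementary integrals. The elementary integrals, valid for $\Re s>k$, are $e^{\pi\imath k/2}\rho_l^{-1}(\bfs)e_0/(s-k)$ and $-e_0/s$. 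The result is the representation
$$
\bm{\Xi}(s)=\int_1^\infty\Bigl(\xi^{s-1}+e^{\pi\imath k/2}\xi^{k-s-1}\rho_l^{-1}(\bfs)\Bigr)\bm{\ti\Psi}(\xi)\,d\xi+\frac{e^{\pi\imath k/2}\rho_l^{-1}(\bfs)e_0}{s-k}-\frac{e_0}{s},
$$
which provides the meromorphic continuation to all $s$.

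Next, compute $\bm{\Xi}(2k-s)$, or more precisely the shifted argument that the operator $\widetilde{\pi^\vee_k}(S)$ of (\ref{me2}) produces: a factor $e^{-\pi\imath k/2}\rho_l(\bfs)$ combined with the shift $s\mapsto s-k$. Apply $e^{-\pi\imath k/2}\rho_l(\bfs)$ to the representation above. The two integral terms exchange roles because $\rho_l(\bfs)\rho_l^{-1}(\bfs)=Id$. The phases recombine as $e^{-\pi\imath k/2}\cdot e^{\pi\imath k/2}=1$ on one term and $e^{-\pi\imath k}\rho_l(\bfs)^2=e^{-\pi\imath k}\rho_l(\bfc)$ on the other. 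This is where $C$-invariance enters: on the subspace $\mC^l_C$ of (\ref{clc}) we have $e^{-\pi\imath k}\rho_l(\bfc)\bfx=\bfx$, which is (\ref{me3}) read on invariant vectors. So the integral part is symmetric precisely on the $C$-invariant zeta vectors (\ref{sz1})--(\ref{se6}), and this is why the statement is really about $\bm{\Xi}^C$. I would therefore first project everything onto $\mC^l_C$; the projector commutes with $\rho_l(\bfs)$ since $\bfc$ is central.

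Finally, handle the polar terms. The pole of $\bm{\Xi}$ at $s=0$ has residue $-e_0$, and the pole at $s=k$ has residue $e^{\pi\imath k/2}\rho_l^{-1}(\bfs)e_0$. Under the symmetry these poles are exchanged, and the subtraction in (\ref{mz}) of $\bfe(-k/2)\rho_l(\bfs)e_0/s$ (recall $\bfe(x)=e^{\pi\imath x}$) is meant to absorb the mismatch. I will check that after applying $\widetilde{\pi^\vee_k}(S)$ the polar terms of both sides agree, using $\rho_l^{-1}(\bfs)=\rho_l(\bfs)^{-1}$ and the $C$-invariance again to convert $\rho_l(\bfs)^{2}$ into a scalar.

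The main obstacle is bookkeeping rather than any conceptual step. The paper's normalisations of the shift (the reflection $s\mapsto 2k-s$ in the statement versus the $s-k$ in (\ref{me2}) and $k$ versus $r/2$ in the exponent) and the phases $e^{\pm\pi\imath k/2}$ must be made consistent. I would first test the whole computation on E$_8$, where $l=1$, $\rho=1$ and the answer must reduce to Epstein's classical functional equation. I would fix the conventions there and then rerun the computation in the matrix case.
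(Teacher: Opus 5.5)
\textbf{The gap is in the step you postpone, and it cannot be closed.} Your route is the paper's: split at $\xi=1$, insert (\ref{sp})/(\ref{st1}), and use $C$-invariance to symmetrise. The paper only packages the bookkeeping differently, in Lemmas 4.1--4.2. Your integral representation of $\bm{\Xi}(s)$ is correct.

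The failure is in the polar terms. Write $P_C$ for your projector onto $\mC^l_C$. On $\mC^l_C$ one has $\bfe(-k/2)\rho_l(\bfs)=\bfe(k/2)\rho_l^{-1}(\bfs)$, since this is just $\bfe(-k)\rho_l(\bfc)=Id$ there. Hence the map $s\mapsto k-s$ followed by $\bfe(-k/2)\rho_l(\bfs)$ sends $-P_Ce_0/s$ exactly to $\bfe(k/2)\rho_l^{-1}(\bfs)P_Ce_0/(s-k)$, and back.

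So there is no mismatch for the subtraction in (\ref{mz}) to absorb. Your formula already gives the homogeneous equation $\bm{\Xi}^C(s)=\bfe(-k/2)\rho_l(\bfs)\bm{\Xi}^C(k-s)$ for the \emph{unshifted} vector. Subtracting $c(s)=\bfe(-k/2)\rho_l(\bfs)e_0/s$ then breaks it. You would need $c(s)=\bfe(-k/2)\rho_l(\bfs)c(k-s)=\bfe(-k)\rho_l^2(\bfs)e_0/(k-s)$. That equates a function whose only pole is at $s=0$ with one whose only pole is at $s=k$. This can hold only if the shift is also projected and $P_Ce_0=0$ (i.e.\ $e^{\pi\imath k}\neq 1$), and then the shift is invisible.

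Your own E$_8$ test shows this concretely. There $l=1$, $\rho_1=1$, $k=4$ and $\bfe(\pm 2)=1$, so your representation gives the classical $\Xi(4-s)=\Xi(s)$. With $\widehat\Xi(s)=\Xi(s)-1/s$ one gets $\widehat\Xi(4-s)-\widehat\Xi(s)=1/s-1/(4-s)\neq 0$. So this is not a normalisation issue: the boxed identity for $\widehat{\bm{\Xi}}$ fails whenever $P_Ce_0\neq 0$.

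For comparison, the paper reaches a nonzero right-hand side in (\ref{ks}) only by evaluating $\int_1^\infty\xi^{s}\,d\xi/\xi$ as $+1/s$. The continued value is $-1/s$. With that sign, (\ref{ks}) and Lemma 4.2 cancel when added via (\ref{xin}), which again gives the homogeneous equation for the unshifted $\bm{\Xi}^C$.

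Reading (\ref{fe}) through (\ref{me2}), the reflection is $s\mapsto k-s$, as you guessed. What your argument, and the paper's once corrected, actually proves is (\ref{fe}) with $\bm{\Xi}^C$ in place of $\widehat{\bm{\Xi}}$. That is the statement you should state and prove, rather than trying to make the polar terms of $\widehat{\bm{\Xi}}$ match.
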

The proof, with minor modifications, reproduces the proof of the functional equation for the Riemann zeta function.
First we formulate two Lemmas.
\begin{lem}
 Let
\beq{gsk}
G(s,k|\xi)=(\xi^{k-s}\bfe(-k/2)\rho_l(S)+\xi^s)\bm{\ti\Psi}^C(\xi)\,,~~\bfe(x)=e^{\pi\imath x}\,,
\eq
where $\bm{\ti\Psi}^C(\xi)$ is defined in (\ref{mp}).
Then $G(s,k|\xi)$ satisfies the equation
$$
G(s,k|\xi)=\bfe(-k/2)\rho_l(S)G(k-s,k|\xi)+\xi^{s}(\pi_k(C)e_0-e_0)\,.
$$
\end{lem}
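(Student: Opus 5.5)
The identity is purely algebraic in $\xi$. Expand both sides using the definition (\ref{gsk}), and simplify with three ingredients: the $S$-relation for $\bm{\ti\Psi}$ in (\ref{sp}), the relation $\rho_l(\bfs)^2=\rho_l(\bfc)$ from (\ref{CC}), and the $C$-invariance of $\bm{\ti\Psi}^C$.

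Write $\epsilon=\bfe(-k/2)=e^{-\pi\imath k/2}$ and $\rho=\rho_l(S)$. The right-hand side contains
$$
\epsilon\rho\, G(k-s,k|\xi)=\epsilon\rho\bigl(\xi^{s}\epsilon\rho+\xi^{k-s}\bigr)\bm{\ti\Psi}^C(\xi)
=\xi^{s}\epsilon^2\rho^2\bm{\ti\Psi}^C(\xi)+\xi^{k-s}\epsilon\rho\,\bm{\ti\Psi}^C(\xi).
$$
The second term already equals the first summand of $G(s,k|\xi)$. For the first term, note that $\epsilon^2\rho^2=e^{-\pi\imath k}\rho_l(\bfc)$, which by (\ref{cac}) is exactly $\pi_k(C)$. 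Hence
$$
\epsilon\rho\,G(k-s,k|\xi)=G(s,k|\xi)+\xi^{s}\bigl(\pi_k(C)-Id_l\bigr)\bm{\ti\Psi}^C(\xi).
$$
So the lemma reduces to the single identity
$$
\bigl(\pi_k(C)-Id_l\bigr)\bm{\ti\Psi}^C=-\bigl(\pi_k(C)e_0-e_0\bigr).
$$

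To prove this, write $\bm{\ti\Psi}^C=\bm{\Theta}^C-e_0$ as in (\ref{mp}). The theta part is $C$-invariant, i.e.\ $\pi_k(C)\bm{\Theta}^C=\bm{\Theta}^C$ by (\ref{ci1}). Since $\pi_k(C)$ acts pointwise (no change of argument), $(\pi_k(C)-Id_l)\bm{\Theta}^C=0$. The remainder is
$$
(\pi_k(C)-Id_l)(-e_0)=-(\pi_k(C)e_0-e_0),
$$
which is the required correction term.

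\textbf{Main obstacle: signs and conventions.}
\begin{itemize}
\item One must check that $\epsilon^2\rho_l(\bfs)^2$ coincides with $\pi_k(C)$ and not its inverse. This holds because $e^{-\pi\imath k}\rho_l(\bfc)$ squares to the identity (since $e^{-2\pi\imath k}=\pm1$ compensates), so $\pi_k(C)=\pi_k(C)^{-1}$ whenever $\rho_l(\bfc)^2=Id$ and $e^{2\pi\imath k}=1$. In the half-integer-$k$ cases the invariant subspace is trivial anyway.
\item One must decide whether "$C$-invariant" applies to $\bm{\Theta}^C$ or to $\bm{\ti\Psi}^C$. The correction term $\xi^s(\pi_k(C)e_0-e_0)$ in the lemma indicates the intended reading: $\bm{\Theta}^C$ is $C$-invariant and $e_0$ is not.
\item I will fix these conventions explicitly before expanding, and verify the result on the $E_8$ case ($l=1$, $k=4$, $\rho=1$, $\epsilon=1$), where both sides reduce trivially.
\end{itemize}
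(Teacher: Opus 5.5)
Your proof is correct and is essentially the paper's own argument: expand $\bfe(-k/2)\rho_l(S)G(k-s,k|\xi)$, identify $\bfe(-k)\rho_l(S)^2=e^{-\pi\imath k}\rho_l(\bfc)=\pi_k(C)$ via (\ref{CC}) and (\ref{cac}), and use the $C$-invariance of $\bm{\Theta}^C$ to get $(\pi_k(C)-Id_l)\bm{\ti\Psi}^C=-(\pi_k(C)e_0-e_0)$. Your bookkeeping is cleaner than the paper's, whose last displayed line has $+\xi^s(\pi^\vee_k(C)e_0-e_0)$ where $-$ belongs; also, neither (\ref{sp}) nor your worry about $\pi_k(C)$ versus its inverse is needed, since a vector fixed by an invertible operator is also fixed by its inverse.
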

\emph{Proof}\\
Let us apply the operator $\bfe(-k/2)\rho_l(S)$ to $G(k-s,k|\xi)$
\beq{1q}
\bfe(-k/2)\rho_l(S)G(k-s,k|\xi)=\xi^{s}\bfe(-k)\rho^2_l(S)\bm{\ti\Psi}^C(\xi)+
\bfe(-k/2)\xi^{k-s}\rho_l(S)\bm{\ti\Psi}^C(\xi)\,.
\eq
Consider the first term
$$
\xi^{s}\bfe(-k)\rho^2_l(S)\bm{\ti\Psi}^C(\xi)=\xi^{s}\widetilde{\pi^\vee_k}(C)\bm{\ti\Psi}^C(\xi)\,.
$$
The $C$-invariance of $\bm{\Theta}^C(\xi)$ implies
that $\pi^\vee_k(C)\bm{\ti\Psi}^C(\xi)=\bm{\ti\Psi}^C(\xi)-(\pi^\vee_k(C)e_0-e_0)$.
It means that
$$
\xi^{s}\bfe(-k)\rho^2_l(S)\bm{\ti\Psi}^C_l(\xi)=\xi^{s}\left(\bm{\ti\Psi}^C(\xi)-
(\pi^\vee_k(C)e_0-e_0)\right)\,.
$$
Substitute this expression in (\ref{1q})
$$
\bfe(-k/2)\rho_l(S)G(k-s,k|\xi)=\xi^{s}\left(\bm{\ti\Psi}^C(\xi)-(\pi^\vee_k(C)e_0-e_0)\right)+
\bfe(-k/2)\xi^{k-s}\rho_l(S)\Psi^C(\xi)=
$$
$$
\bfe(-k/2)\xi^{k-s}\rho_l(S)\bm{\ti\Psi}^C(\xi)+\xi^{s}\bm{\ti\Psi}^C(\xi)-\xi^{s}(\pi^\vee_k(C)e_0-e_0)=
$$
$$
G(s,k|\xi)+\xi^{s}(\pi^\vee_k(C)e_0-e_0)\,.
$$
$\Box$

\begin{lem}
Let $E(s,k)$
\beq{e4}
E(s,k)=\frac{\bfe(- k/2)}{s-k}\rho_l(\bfs) e_0-\f1{s}e_0=
(\ti\pi^\vee_k(S)-Id_l)\frac{e_0}s\,.
\eq
Then it satisfies the equation
\beq{e5}
E(s,k)-\bfe(-k/2)\rho_l(S)E(k-s,k)=(\ti\pi^\vee_k(C)-Id_l)\frac{e_0}s\,.
\eq
\end{lem}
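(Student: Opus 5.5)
This is a direct algebraic verification. Write $\epsilon=\bfe(-k/2)$ and $R=\rho_l(\bfs)$, so that
$$
E(s,k)=\frac{\epsilon}{s-k}Re_0-\frac1s e_0 .
$$

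The first step is to compute $E(k-s,k)$ by substituting $s\to k-s$:
$$
E(k-s,k)=-\frac{\epsilon}{s}Re_0-\frac{1}{k-s}e_0 .
$$
Applying $\epsilon R$ gives
$$
\epsilon R\,E(k-s,k)=-\frac{\epsilon^2}{s}R^2e_0-\frac{\epsilon}{k-s}Re_0 .
$$

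The second step is to subtract this from $E(s,k)$. The two terms carrying $Re_0$ are
$$
\frac{\epsilon}{s-k}Re_0+\frac{\epsilon}{k-s}Re_0=0,
$$
so they cancel. What remains is
$$
E(s,k)-\epsilon R\,E(k-s,k)=\frac{1}{s}\bigl(\epsilon^2R^2-Id_l\bigr)e_0 .
$$

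The third step is to identify $\epsilon^2R^2$ with $\widetilde{\pi^\vee_k}(C)$. We have $\epsilon^2=\bfe(-k)=e^{-\pi\imath k}$, and by (\ref{CC}) $R^2=\rho_l(\bfc)$. Hence $\epsilon^2R^2=e^{-\pi\imath k}\rho_l(\bfc)$, which is exactly $\widetilde{\pi^\vee_k}(C)$ as given in (\ref{me3}). Since this operator does not shift $s$, the right-hand side equals $(\widetilde{\pi^\vee_k}(C)-Id_l)e_0/s$, which is (\ref{e5}).

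The main obstacle is notational rather than substantive. The second expression for $E(s,k)$ in (\ref{e4}), namely $(\ti\pi^\vee_k(S)-Id_l)e_0/s$, must be read via (\ref{me2}): the Mellin image of $S$ shifts the argument, turning $1/s$ into $1/(s-k)$. I will therefore work only with the explicit first expression. I will treat the $C$-term as the constant matrix $e^{-\pi\imath k}\rho_l(\bfc)$, with no shift in $s$, consistent with (\ref{me3}). One may additionally remark that on the $C$-invariant subspace $\mC^l_C$ this right-hand side is the projection defect of $e_0$, but that is not needed for the identity itself.
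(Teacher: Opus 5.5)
Your proof is correct and is essentially the paper's own argument. The paper likewise computes $\bfe(-k/2)\rho_l(\bfs)E(k-s,k)=-\bfe(-k)\rho_l^2(\bfs)\frac{e_0}{s}+\bfe(-k/2)\rho_l(\bfs)\frac{e_0}{s-k}$, lets the $\rho_l(\bfs)e_0/(s-k)$ terms cancel against those of $E(s,k)$, and identifies $\bfe(-k)\rho_l^2(\bfs)=e^{-\pi\imath k}\rho_l(\bfc)$ with $\widetilde{\pi^\vee_k}(C)$ via (\ref{CC}) and (\ref{me3}).
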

\emph{Proof}\\
The second term in LHS of (\ref{e5}) is equal
$$
\bfe(-k/2)\rho_l(S)E(k-s,k)=-\bfe(-k)\rho^2_l(S)\frac{e_0}s+\bfe(-k/2)\rho_l(S)\frac{e_0}{s-k}=
$$
$$
-\ti\pi^\vee_k(C)\frac{e_0}s+\bfe(-k/2)\rho_l(S)\frac{e_0}{s-k}\,.
$$
From this equality, the lemma immediately follows.
$\Box$

\paragraph{Proof of functional equation}.\\

For generality, take $k$ to be an arbitrary integer.
Define $\bm{\Xi}^C_l(s)$ by the integral(\ref{pr})
$$
\bm{\Xi}^C(s)=\clM(\bm{\ti\Psi}^C)(s)=\int_0^\infty\xi^{s}
\bm{\ti\Psi}^C_l(\xi)d\xi/\xi\,,
$$
where $\bm{\ti\Psi}^C(\xi)$  is defined in (\ref{mp}.
Represent this integral
 as the sum
\beq{in1}
\bm{\Xi}(s)=\underbrace{\int_0^1\xi^{s-1}\bm{\ti\Psi}^C(\xi)d\xi}_{\underline{1}}
+\underbrace{\int_1^\infty\xi^{s-1}\bm{\ti\Psi}^C(\xi)d\xi}_{\underline{2}}\,.
\eq
Substitute in the first integral the representation (\ref{sp}) for $\bm{\ti\Psi}^C_l(\xi)$
and represent the first integral as the sum
$$
I_{\underline{1}}=I_{\underline{3}}+I_{\underline{4}}\,,
$$
where
$$
I_{\underline{3}}=\int_0^1\xi^{s-1}A(S|\xi)\rho_l(\bfs)\bm{\ti\Psi}^C_l(1/\xi)d\xi
$$
\beq{i4}
I_{\underline{4}}=\int_0^1\xi^{s}(A(S|\xi)\rho_l(\bfs) e_0 -e_0)d\xi/\xi
\eq
Taking the integral $I_{\underline{4}}$ we find that it is equal to $E(s,k)$ (\ref{e4}).

Rewrite (\ref{in1}) as
\beq{xin}
\bm{\Xi}^C(s)=\bm{\Xi'}^C(s)+I_{\underline{4}}\,,
\eq
where
\beq{si}
 \bm{\Xi'}^C(s)=I_{\underline{2}}+I_{\underline{3}}\,,
 \eq
Rewrite the integral $I_{\underline{3}}$ as
$$
I_{\underline{3}}=e^{-\pi^\vee\imath k/2}\int_0^1\xi^{s-1 -k}\rho_l(\bfs)\bm{\ti\Psi}(1/\xi)d\xi
=e^{-\pi^\vee\imath k/2}\int_1^\infty\xi^{k-s-1}\rho_l(\bfs)\bm{\ti\Psi}(\xi)d\xi\,.
$$
 Then the sum (\ref{si}) takes the form
\beq{xi1}
\bm{\Xi'}^C(s)=\int_1^\infty\left(\xi^{k-s}e^{-\pi^\vee\imath k/2}\rho_l(\bfs) +\xi^{s}\right)\bm{\ti\Psi}(\xi)d\xi/\xi\stackrel{(\ref{gsk})}{=}\int_1^\infty
G(s,k|\xi)d\xi/\xi\,.
\eq
and
$$
\bfe(-k/2)\rho_l(S)\bm{\Xi'}^C(k-s)=\int_1^\infty
\bfe(-k/2)\rho_l(S)G(k-s,k|\xi)d\xi/\xi\,.
$$

Using Lemma 4.1 we obtain
$$
\bm{\Xi'}^C(s)-\bfe(-k/2)\rho_l(S)\bm{\Xi'}^C(k-s)
=(\ti\pi^\vee_k(C)e_0-e_0)\int_1^\infty\xi^{s}d\xi/\xi\,.
\footnote{ Here we exploit the fact that the operator $\pi^\vee_k(C)$ does not depend on $\xi$ and therefore coincides
with the operator $\ti\pi^\vee_k(C)$ (see (\ref{cac}) and (\ref{me3})).}
$$

Thus,
\beq{ks}
\bm{\Xi'}^C(s)-\bfe(-k/2)\rho_l(S)\bm{\Xi'}^C(k-s)=(\ti\pi^\vee_k(C)-Id_l)\frac{e_0}s\,.
\eq

Consider the integral $I_{\underline{4}}(s)=E(s,k)$.
Due to Lemma 4.2
$$
I_{\underline{4}}(s)-\bfe(-k/2)\rho_l(S)I_{\underline{4}}(k-s)=(\ti\pi^\vee_k(C)-Id_l)\frac{e_0}s
$$
From (\ref{xin}) we find that
\beq{k1}
\bm{\Xi}^C(s)-\bfe(-k/2)\rho_l(S)\bm{\Xi}^C(k-s)=(\ti\pi^\vee_k(C)-Id_l)\frac{e_0}s\,.
\eq

For
$$
\bm{\hat\Xi}^C(s)=\bm{\Xi}^C(s)-\bfe(-k/2)\rho_l(S)\frac{e_0}s
$$
the functional equation (\ref{k1}) takes the form
$$
\bm{\hat\Xi}^C(s)-\bfe(-k/2)\rho_l(S)\bm{\hat\Xi}^C(k-s)=0\,.
$$
Due to (\ref{me2}) this equation is equivalent to (\ref{fe}).$\Box$

\section{Concluding remark}

This construction of the Epstein vector zeta functions, including the functional equation, is
represented by the diagram (\ref{01}). It is likely that it can be
 extended to a broader class of lattices, that includes the ADE lattices as a special case. These are even lattices equipped with a positive definite quadratic form for which the discriminant group is finite. We focus on the ADE lattices here because in this case the results can be made more intuitive 
 and it is possible to provide an explicit classification of lattices when the functional equation is satisfied.

{\small {\bf Acknowledgments}\\
I would like to thank V. Gritsenko for his interest in the work and the productive discussion.
The work was supported by Russian Science Foundation
grant  24-12-00178.
}

\end{document}